\documentclass[journal,onecolumn]{IEEEtran}
\usepackage[T1]{fontenc}
\usepackage{amsmath,amssymb,amsthm,mathtools}
\usepackage[hidelinks]{hyperref}
\newtheorem{theorem}{Theorem}[section]
\newtheorem{proposition}[theorem]{Proposition}

\newtheorem{corollary}[theorem]{Corollary}
\theoremstyle{remark}

\newcommand{\R}{\mathbb{R}}

\newcommand{\SU}{\mathrm{SU}}
\newcommand{\SO}{\mathrm{SO}}
\newcommand{\Gr}{\mathrm{Gr}}
\newcommand{\Sym}{\mathrm{Sym}}
\newcommand{\Tr}{\operatorname{Tr}}
\newcommand{\Cl}{\mathcal{C}}

\newcommand{\Uclosed}{\mathcal{U}_{18}}
\newcommand{\norm}[1]{\left\lVert #1\right\rVert}
\begin{document}
\title{Deep Holes in the Clifford Hierarchy}
\author{Ian~Teixeira, David~Meyer
\thanks{Both authors are with the Department of Mathematics, University of California, San Diego, CA 92093}
}
\maketitle
\begin{abstract}
We determine the covering radius of the topological closure of the
single-qubit Clifford hierarchy in $\SU(2)\cong S^3$.  This closure is a
union of $18$ great circles --- the Clifford--Pauli circles --- and we
prove that its covering radius is $\arccos\sqrt{5/6}$.  The extremal
points, which we call \emph{deep holes}, form a single orbit of size
$192$ under left and right multiplication by Clifford gates, and are
described in closed form.  Equivalently, the minimum over one-qubit
unitaries of the all-level Clifford fidelity is $5/6$. The proof rests on two structures attached to the
configuration of $18$ planes in $\R^4$: their centered rank-two
projectors form an orthonormal basis of the irreducible $\SO(4)$-module
$\Sym_0(4)$, and the projection profile of a
unit quaternion is exactly its image under the double cover
$\SU(2)\to\SO(3)$.  These reduce the covering problem to a minimax
statement for the $\ell^\infty$-norm on $\SO(3)$ which we solve exactly,
classifying its equality cases.
\end{abstract}
\section{Introduction}
The Clifford hierarchy is the recursively defined sequence
$\Cl_1\subset\Cl_2\subset\cdots$ of quantum gates obtained by requiring
conjugation of Pauli operators to descend one level.  Its first two levels
are the Pauli and Clifford groups, while higher levels contain familiar
non-Clifford gates that can be implemented by gate-teleportation
constructions \cite{teleportationGottesmanChuang}.  The hierarchy also
appears naturally in restrictions on fault-tolerant logical gates:
Eastin--Knill rules out a universal transversal gate set for a nontrivial
quantum code \cite{eastinKnill}, Bravyi--K\"onig constrain
locality-preserving logical gates in topological stabilizer codes
\cite{BravyiPRL}, and disjointness gives related hierarchy-level bounds for
transversal and constant-depth logical gates on stabilizer codes
\cite{disjointnessStabilizerCodes}.  These results make the geometry of the
hierarchy itself a natural object to study.
For one qubit that geometry is unusually rigid.  Zeng, Chen, and Chuang
proved that every single-qubit hierarchy gate is semi-Clifford
\cite{semiClifford}.  Subsequent work has clarified the structure of the
third and higher levels from several directions
\cite{BeigiShor2010,UnWeylingtheCliffordHierarchy,Anderson_2024}, and the
recent one-qudit classification gives a complete normal form in
prime dimension \cite{deSilva2025clifford}.  Together with the
classification of diagonal hierarchy gates \cite{CuiGottesmanKrishna}, the
single-qubit result implies that, as the level grows, the hierarchy becomes
dense not in all of $\SU(2)$ but in a finite union of one-parameter Clifford
translates.  Under $\SU(2)\cong S^3$, these translates are great circles.
Our first observation is that there are exactly eighteen such circles.  We
write their union as
\[
    \Uclosed=\bigcup_{r=1}^{18}(S^3\cap\Pi_r),
    \qquad \Pi_r\in\Gr(2,4).
\]
The main result is the following.
\begin{theorem}[Covering radius and deep holes]\label{thm:main}
With the round metric on $S^3\cong\SU(2)$,
\[
    \rho(\Uclosed)=\arccos\sqrt{\frac56}.
\]
The points attaining this distance form a single orbit under left and
right multiplication by the determinant-one single-qubit Clifford group.
There are exactly $192$ such points in $S^3$, hence $96$ projective
single-qubit gates modulo global phase.
\end{theorem}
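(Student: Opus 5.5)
The plan is to reduce the covering problem on $S^3$ to a minimax problem on $\SO(3)$ and then solve that problem by hand.

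\textbf{Step 1: the circles.} Each circle in $\Uclosed$ is a Clifford translate $c\,e^{t\,iZ}c'$ of the diagonal phase circle. Under the double cover $q\mapsto R(q)\in\SO(3)$, the set of unit quaternions on such a circle is exactly the set of $q$ for which $R(q)$ sends the $z$-axis to a fixed signed coordinate axis. Up to left/right Clifford translation, the circles are therefore indexed by triples $(i,j,\pm)$ with $i,j\in\{1,2,3\}$: the condition is $R(q)e_j=\pm e_i$. This gives $9\cdot 2=18$ circles. I would check that this list has no repetitions, and that each pair of planes $\Pi_{ij}^{\pm}$ consists of mutually orthogonal complements in $\R^4$.

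\textbf{Step 2: the projection profile.} For a unit $q$ and a plane $\Pi$, the angular distance to the great circle $S^3\cap\Pi$ is $\arccos\norm{P_\Pi q}$. Writing $q=a+bi+cj+dk$ and the plane explicitly, a direct computation should give the identity
\[
  \norm{P_{\Pi_{ij}^{\pm}}q}^2=\tfrac12\bigl(1\pm R(q)_{ij}\bigr),
\]
with the sign conventions fixed once. It suffices to verify this on one representative plane, for example the case $R(q)e_3=e_3$, where $\norm{P q}^2=a^2+d^2$ and $R_{33}=a^2-b^2-c^2+d^2$. Equivariance under left and right Clifford multiplication, which acts on $R$ by signed permutations of rows and columns, then gives all 18 cases. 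Consequently
\[
  \cos^2 d(q,\Uclosed)=\tfrac12\bigl(1+\norm{R(q)}_\infty\bigr),
\]
where $\norm{\cdot}_\infty$ is the max-absolute-entry norm. The theorem then reduces to the statement
\[
  \min_{R\in\SO(3)}\norm{R}_\infty=\tfrac23,
\]
which yields $\cos^2\rho=\tfrac56$, together with a classification of the minimizers.

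\textbf{Step 3: the minimax (main obstacle).} The upper bound is witnessed by an explicit matrix such as
\[
  M=\tfrac13\begin{pmatrix}2&2&-1\\-1&2&2\\2&-1&2\end{pmatrix}\in\SO(3).
\]
For the lower bound I would argue by contradiction: suppose every $|R_{ij}|<\tfrac23$. Each row and column is a unit vector, so every row has at least two entries with $|R_{ij}|>\tfrac13$, since two entries each below $\tfrac13$ would force the third above $\sqrt{7}/3$. Pigeonholing on these large entries, together with the cofactor identity $R_{ij}=R_{kl}R_{mn}-R_{kn}R_{ml}$ (valid because $\det R=1$), should yield a contradiction.

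A cleaner alternative, which I would try first, combines this with a trace argument. Choose a signed permutation $S$ in the octahedral group maximizing $\Tr(SR)=\sum_i \pm R_{i\sigma(i)}$. The octahedral group's covering property in $\SO(3)$ forces $\Tr(SR)$ to be large, and then unit-norm constraints on rows push the largest entry to at least $\tfrac23$.

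\textbf{Step 4: equality cases and counting.} For equality, the inequalities above must be tight. I expect this to force every row and column to be a permutation of $(\pm\tfrac23,\pm\tfrac23,\pm\tfrac13)$, i.e.\ all minimizers are signed-permutation images of $M$ (and of $M^{T}$). Such $R$ form one orbit under left and right multiplication by the octahedral group. Counting the distinct matrices of this shape in $\SO(3)$ gives $96$; lifting through the double cover gives $192$ points of $S^3$, which correspond to $96$ projective gates.

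Transitivity of the orbit follows because the Clifford group surjects onto the octahedral group acting on both sides. The delicate parts are the exact lower bound $\tfrac23$ and the rigidity of its equality cases, which I expect to need a careful case split on the positions of the entries of modulus $\tfrac23$.
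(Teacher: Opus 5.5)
Your reduction in Steps 1--2 is correct. Deriving $\norm{P_{\Pi}q}^2=\tfrac12\bigl(1\pm R(q)_{ij}\bigr)$ from the single check $a^2+d^2=\tfrac12(1+R_{33})$ plus left/right Clifford equivariance is a legitimate substitute for the paper's explicit centered projectors $Q_1,\dots,Q_9$.

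The genuine gap is Step 3, which carries the whole content of the theorem: neither of your routes proves $\norm{R}_{\max}\ge\tfrac23$.

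The trace route cannot work, because the octahedral covering property is quantitatively too weak. Take
\[
R=\begin{pmatrix}1/\sqrt2&0&1/\sqrt2\\ 1/2&1/\sqrt2&-1/2\\ -1/2&1/\sqrt2&1/2\end{pmatrix}\in\SO(3),
\]
the image of $q\propto(1,\sqrt2-1,\sqrt2-1,3-2\sqrt2)$, which is a vertex of the Voronoi cell of the binary octahedral group. Every one of the $48$ signed permutations $S$ gives $\Tr(SR)\le\tfrac12+\sqrt2<2$. So the best this method can certify is an entry of size $(\tfrac12+\sqrt2)/3\approx0.638$, and ``unit-norm constraints'' do not close the distance to $\tfrac23$ without a new idea.

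The cofactor route stops at ``should yield a contradiction''. The row observation you feed it (two entries per row above $\tfrac13$) is too weak to drive it.

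The missing idea is a sign-parity argument. First sharpen the row-norm step: if $\norm{R}_{\max}<\tfrac23$, then \emph{every} entry has modulus $>\tfrac13$. Indeed, one entry $\le\tfrac13$ in a unit row leaves the other two squares summing to $\ge\tfrac89$, so one of them is $\ge\tfrac23$. With first two rows $(a,b,c)$ and $(d,e,f)$, the third row is $(bf-ce,\,cd-af,\,ae-bd)$. All six products have modulus in $(\tfrac19,\tfrac49)$, so if the two products in any one identity had the same sign, that entry would have modulus $<\tfrac13$. Hence $ae,bd$ have opposite signs, as do $bf,ce$ and $cd,af$. Multiplying the three relations gives $a^2b^2c^2d^2e^2f^2<0$, a contradiction.

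Your Step 4 is likewise only an expectation, and it needs the same mechanism. At equality the three sign comparisons still multiply to $+1$, so some identity has same-sign products. This forces, e.g., $\{|ae|,|bd|\}=\{\tfrac19,\tfrac49\}$, which pins the magnitude pattern to $\tfrac13\left(\begin{smallmatrix}1&2&2\\2&1&2\\2&2&1\end{smallmatrix}\right)$ up to permutations. Orthogonality and $\det R=1$ then fix the signs.

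Without that classification you have neither the single-orbit statement nor the counts $96$ and $192$. Those also use two further facts: the replacement $(L,M)\mapsto(-L,-M)$ to make both signed permutations proper, and $-I\in\Cl_2$ to merge the two lifts of each rotation into one orbit.
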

Recent work on testing membership in the Clifford hierarchy uses the
normalized-trace overlap to define the degree-$k$ Clifford fidelity
\cite{Bao2026HierarchyTesting}.  In the present one-qubit setting it is
natural to consider the all-level quantity
\begin{equation}\label{eq:all-level-fidelity}
    F_\infty(U)
      :=\sup_{k\geq1}\ \max_{V\in\Cl_k}
      \left|\frac12\Tr(V^\dagger U)\right|^2.
\end{equation}
Under the quaternion identification $\SU(2)\cong S^3$, the normalized
trace inner product is the Euclidean inner product on $\R^4$.  Since the
union of the hierarchy levels is dense in $\Uclosed$, the quantity in
\eqref{eq:all-level-fidelity} is exactly the largest squared projection
of the corresponding unit quaternion onto one of the eighteen planes.
Thus Theorem~\ref{thm:main} is equivalently the sharp fidelity statement
\[
    \min_{U\in\SU(2)}F_\infty(U)=\frac56,
\]
and its deep holes are precisely the equality cases.
This theorem computes the greatest distance from a single-qubit unitary to the closed hierarchy.  Because low levels of the hierarchy are important for fault-tolerant schemes based on gate teleportation and are also singled out by general restrictions on protected logical gates \cite{teleportationGottesmanChuang,eastinKnill,disjointnessStabilizerCodes,BravyiPRL}, this gives a precise sense in which a unitary can be maximally far from the best-understood single-qubit hierarchy operations.  It complements exact and approximate synthesis results over Clifford-type gate sets, such as Clifford$+T$, by identifying the unitaries farthest from all Clifford translates of Pauli-axis rotations \cite{KliuchnikovMaslovMosca2013,RossSelinger2016}.
There is also a classical coding-theoretic side to the problem.  Covering
questions for highly symmetric finite configurations are closely connected
with Delsarte linear programming and association schemes
\cite{Delsarte1973,ConwaySloane,BCN1989}.  Finite configurations of
subspaces bring in Grassmannian designs and fusion frames
\cite{GrassmannianDesignsBachocCoulangeonNebe2002,
FramesCasazzaKutyniok2004,FramesCasazzaKutyniokLi2008,Bachoc2006}.
Their pair geometry has only three relations.  These features explain much
of the symmetry of the configuration and place it naturally near
association-scheme methods, but the sharp covering theorem does not
require linear programming or Bose--Mesner machinery.
Instead, the proof has two simple pieces.  First, after centering the
eighteen rank-two projectors, one obtains nine orthonormal directions and
their negatives in $\Sym_0(4)$.  Second, the projection profile of a point
of $S^3$ is the usual quaternionic parametrization of $\SO(3)$.  Thus the
nonlinear covering problem on $S^3$ becomes the minimax problem
\[
    \min_{R\in\SO(3)}\max_{i,j}|R_{ij}|.
\]
\section{The closed single-qubit hierarchy is eighteen circles}
\label{sec:circles}
We use determinant-one representatives throughout.  Let $X,Y,Z$ be the
usual Pauli matrices and set
\[
    \mathbf i=-iX,\qquad
    \mathbf j=-iY,\qquad
    \mathbf k=-iZ.
\]
Then
\[
    \mathbf i^2=\mathbf j^2=\mathbf k^2=-I,
    \qquad
    \mathbf i\mathbf j=\mathbf k,
    \quad
    \mathbf j\mathbf k=\mathbf i,
    \quad
    \mathbf k\mathbf i=\mathbf j,
\]
so that
\[
    \SU(2)
    =\{aI+b\mathbf i+c\mathbf j+d\mathbf k:
      a^2+b^2+c^2+d^2=1\}
    \cong S^3.
\]
For $P\in\{\mathbf i,\mathbf j,\mathbf k\}$ define the Pauli rotation
circle
\[
    T_P:=\{e^{\theta P}:\theta\in\R\}
        =S^3\cap\operatorname{span}_{\R}\{I,P\}.
\]
Let $\Cl_2\leq\SU(2)$ denote the determinant-one Clifford group.  It is
the binary octahedral group $2O$ of order $48$.  The semi-Clifford theorem
for one qubit says that any hierarchy gate can be written
\[
    U=C_1DC_2,
    \qquad C_1,C_2\in\Cl_2,
\]
with $D$ diagonal \cite{semiClifford}.  In determinant-one form
$D=e^{\theta\mathbf k}$.  The diagonal classification \cite{CuiGottesmanKrishna} gives
\[
    e^{\theta\mathbf k}\in\Cl_k
    \quad\Longleftrightarrow\quad
    \theta\in\frac{2\pi}{2^{k+1}}\mathbb Z.
\]
Hence the union over
$k$ of diagonal hierarchy gates is dense in $T_{\mathbf k}$.
Absorbing the right Clifford into a conjugation gives
\[
    C_1e^{\theta\mathbf k}C_2
      =(C_1C_2)
       e^{\theta(C_2^{-1}\mathbf kC_2)}.
\]
Since Clifford conjugation permutes the signed Pauli directions, the
closure of the entire single-qubit hierarchy is therefore
\begin{equation}\label{eq:closed-hierarchy}
    \overline{\bigcup_{k\geq1}\Cl_k}
    =\bigcup_{C\in\Cl_2}
      \bigcup_{P\in\{\mathbf i,\mathbf j,\mathbf k\}} CT_P.
\end{equation}
For completeness, the reverse inclusion implicit here uses the
fact that each hierarchy level $\Cl_k$ with $k\geq2$ is preserved by left
and right Clifford multiplication.  Dyadic
rotations are dense in each $T_P$, so every point of every circle on the
right-hand side of \eqref{eq:closed-hierarchy} is a limit of hierarchy
gates.
For fixed $P$, two Clifford elements give the same left coset circle
precisely when they differ by an element of $\Cl_2\cap T_P$.  This
intersection is cyclic of order $8$, so there are
\[
    \frac{|\Cl_2|}{|\Cl_2\cap T_P|}=\frac{48}{8}=6
\]
distinct circles for each $P$.  Cosets belonging to two distinct Pauli
rotation subgroups cannot coincide: if $CT_P=C'T_Q$, translating a common
point to the identity would give $T_P=T_Q$.  Thus the three Pauli
directions give exactly
\[
    3\cdot6=18
\]
great circles.
Every such circle is the unit section of a real two-plane:
\[
    CT_P
    =S^3\cap\operatorname{span}_{\R}\{C,CP\}.
\]
We denote the resulting planes by
$\Pi_1,\ldots,\Pi_{18}\subset\R^4$ and their orthogonal projectors by
$P_1,\ldots,P_{18}$.
\section{The eighteen centered projectors}
\label{sec:planes}
The geometry becomes transparent after centering the projectors.  An
explicit ordering of the planes, given in Appendix~\ref{app:Q}, has the
following form.
\begin{proposition}[Centered-projector model]\label{prop:Qbasis}
There are symmetric traceless matrices
$Q_1,\ldots,Q_9\in\Sym_0(4)$ such that
\begin{equation}\label{eq:PfromQ}
    P_i=\frac12I_4+Q_i,
    \qquad
    P_{i+9}=\frac12I_4-Q_i,
    \qquad 1\leq i\leq9,
\end{equation}
and
\begin{equation}\label{eq:Qrelations}
    Q_i^2=\frac14I_4,
    \qquad
    \Tr(Q_iQ_j)=\delta_{ij}.
\end{equation}
Consequently $Q_1,\ldots,Q_9$ form an orthonormal basis of the
nine-dimensional space $\Sym_0(4)$, and the eighteen centered projectors
are exactly
\[
    \{P_r-\tfrac12I_4:1\leq r\leq18\}
       =\{\pm Q_1,\ldots,\pm Q_9\}.
\]
\end{proposition}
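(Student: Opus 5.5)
The plan is to take $Q_r:=P_r-\tfrac12 I_4$ for every $r$ and check three things: the $Q_r$ come in opposite pairs, each satisfies the stated identities, and distinct pairs are orthogonal. The first two are formal; the third needs the geometry of the eighteen circles from Section~\ref{sec:circles}.

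\emph{Step 1 (complementary pairs).} I will show that the orthogonal complement of every plane $\Pi_r$ is again one of the planes. Take $\Pi=\operatorname{span}_{\R}\{C,CP\}$ with $C\in\Cl_2$ and $P\in\{\mathbf i,\mathbf j,\mathbf k\}$. Choose another Pauli unit $Q$ and put $R=PQ=\pm$ the third unit. Left multiplication by a unit quaternion is orthogonal and $\{1,P,Q,R\}$ is orthonormal, so
\[
\Pi^\perp=\operatorname{span}_{\R}\{CQ,CR\}=\operatorname{span}_{\R}\{CQ,\,CQP\}.
\]
Since $CQ\in\Cl_2$, this is the circle $(CQ)T_P$, which lies in the family. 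The $18$ planes therefore split into $9$ complementary pairs $\{\Pi_i,\Pi_{i+9}\}$, and $P_{i+9}=I_4-P_i$ gives \eqref{eq:PfromQ}.

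\emph{Step 2 (the identity $Q_i^2=\tfrac14 I_4$).} This is formal. Since $P_i^2=P_i$,
\[
Q_i^2=P_i-P_i+\tfrac14 I_4=\tfrac14 I_4,
\]
and $\Tr Q_i=2-2=0$. Hence $\Tr(Q_i^2)=1$. For $i\ne j$,
\[
\Tr(Q_iQ_j)=\Tr(P_iP_j)-1,
\]
so the orthogonality relations reduce to the claim $\Tr(P_iP_j)=1$ whenever $\Pi_i,\Pi_j$ are neither equal nor complementary.

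\emph{Step 3 (reduction by symmetry).} Left multiplication by $\Cl_2$ is orthogonal and permutes the eighteen planes transitively. Transitivity holds because the right-multiplication map $CT_P\mapsto CT_PD=CD\,T_{D^{-1}PD}$ moves between Pauli directions, and left multiplication moves within each direction. It therefore suffices to fix $\Pi=\operatorname{span}\{1,\mathbf k\}$ and check the $16$ other planes $\Pi'$. If $u,v$ is an orthonormal basis of $\Pi'$, then
\[
\Tr(P_\Pi P_{\Pi'})=|P_\Pi u|^2+|P_\Pi v|^2,
\]
and for a quaternion $a+b\mathbf i+c\mathbf j+d\mathbf k$ we have $|P_\Pi(\cdot)|^2=a^2+d^2$. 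Take $u=C$ and $v=CQ$ with $C\in 2O$. The elements of $2O$ are $\pm1,\pm\mathbf i,\pm\mathbf j,\pm\mathbf k$, the $16$ elements $\tfrac12(\pm1\pm\mathbf i\pm\mathbf j\pm\mathbf k)$, and the $24$ elements $\tfrac1{\sqrt2}(\pm x\pm y)$ with $x\neq y\in\{1,\mathbf i,\mathbf j,\mathbf k\}$. On these, $a^2+d^2$ takes only the values $0$, $\tfrac12$ and $1$. The check then splits into three cases.
\begin{itemize}
\item Circles of type $\pm1$ or $\pm\mathbf i$ with $P\neq\mathbf k$ (for example $\operatorname{span}\{1,\mathbf i\}$). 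Here $C$ and $CP$ contribute $1+0$.
\item Circles through the $\tfrac12(\pm1\pm\cdots)$ elements. Here each basis vector contributes $\tfrac12$.
\item Circles through $\tfrac1{\sqrt2}(\cdots)$ elements. For instance, with $C=(1+\mathbf i)/\sqrt2$ and $Q=\mathbf j$ we get $C\mathbf j=(\mathbf j+\mathbf k)/\sqrt2$, contributing $\tfrac12+\tfrac12$.
\end{itemize}
The value $0$ (resp.\ $2$) occurs only when $\Pi'=\Pi^\perp$ (resp.\ $\Pi'=\Pi$).

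The main obstacle is organizing this enumeration so that it visibly covers all $16$ planes. I would list the six circles in each Pauli direction through coset representatives $C$. The stabilizer of $\Pi$ in the left-right action, which contains the left action of $\Cl_2\cap T_{\mathbf k}$, should cut this to a few orbit checks.

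\emph{Step 4 (conclusion).} Once $\Tr(P_iP_j)=1$ is established, $Q_1,\dots,Q_9$ are orthonormal for the trace form in $\Sym_0(4)$. Since $\dim\Sym_0(4)=10-1=9$, they form a basis. The description of the centered projectors as $\{\pm Q_1,\dots,\pm Q_9\}$ is then just \eqref{eq:PfromQ}.
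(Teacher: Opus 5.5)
Your proof is correct in substance, but it runs in the opposite direction from the paper's.

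\textbf{Comparison with the paper.} The paper starts from nine explicitly displayed matrices and checks $Q_i^2=\tfrac14 I_4$ and $\Tr(Q_iQ_j)=\delta_{ij}$ by multiplication. Only afterwards does it identify the projectors $\tfrac12 I_4\pm Q_i$ with the hierarchy planes, by left-multiplying the six coordinate planes by the three Hadamard-type elements. You start from the planes themselves:
\begin{itemize}
\item $Q_r^2=\tfrac14 I_4$ is automatic for any rank-two projector;
\item the complementary pairing comes from $(\operatorname{span}\{C,CP\})^\perp=\operatorname{span}\{CQ,CQP\}$;
\item the only real input is $\Tr(P_iP_j)=1$ for distinct, non-complementary planes, i.e.\ $\cos^2\theta_1+\cos^2\theta_2=1$ for their principal angles.
\end{itemize}
The last item is a consequence of the pair geometry that the paper records but does not use. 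Your route explains \emph{why} the centered projectors form an orthonormal frame, and it never has to match abstract matrices to circles. The paper's route gives a specific labeling and explicit coordinates for the $Q_i$. Section~\ref{sec:SO3-reduction} genuinely needs these: the arrangement of the $y_i$ into $M(u)$ with $2M(u)=R(u)$ in Proposition~\ref{prop:doublecover} depends on them. Your existence argument does not provide them.

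\textbf{Two points in Step~3 need tightening.}

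First, left multiplication by $\Cl_2$ alone is not transitive on the eighteen planes. It preserves the Pauli direction and has three orbits of six. Transitivity needs the right action as well, which your next sentence in fact uses. Right multiplication is also orthogonal, so traces are preserved.

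Second, your three bullets are exhaustive, since they sort circles by the type of the representative $C$. However, the third bullet is only a single example. Nothing you wrote rules out a mixed value such as $\tfrac12+1$ there.

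The case split can be removed altogether. For $C=a+b\mathbf i+c\mathbf j+d\mathbf k$ one has
\begin{itemize}
\item $C\mathbf i=-b+a\mathbf i+d\mathbf j-c\mathbf k$,
\item $C\mathbf j=-c-d\mathbf i+a\mathbf j+b\mathbf k$,
\item $C\mathbf k=-d+c\mathbf i-b\mathbf j+a\mathbf k$.
\end{itemize}
With $f(w)=|P_\Pi w|^2$, this gives $f(C\mathbf i)=f(C\mathbf j)=b^2+c^2=1-f(C)$ and $f(C\mathbf k)=f(C)$.

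So every circle in the $\mathbf i$- or $\mathbf j$-direction gives trace exactly $1$, for any unit $C$. Every circle in the $\mathbf k$-direction gives $2f(C)$, which lies in $\{0,1,2\}$ by your list of values on $2O$. A trace of $0$ or $2$ occurs only for $\Pi^\perp$ and $\Pi$ respectively. That closes the argument.
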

\begin{proof}
The matrices are displayed in Appendix~\ref{app:Q}.  The identities in
\eqref{eq:Qrelations} follow immediately by multiplication.  Since
$Q_i^2=I_4/4$ and $\Tr Q_i=0$, the matrices
$I_4/2\pm Q_i$ are complementary rank-two orthogonal projectors.  Direct
left multiplication of the six coordinate planes by the three
Hadamard-type Clifford elements
\[
    \frac{\mathbf i+\mathbf j}{\sqrt2},\qquad
    \frac{\mathbf i+\mathbf k}{\sqrt2},\qquad
    \frac{\mathbf j+\mathbf k}{\sqrt2}
\]
gives precisely these eighteen projectors.
\end{proof}
This orthonormal description immediately gives the low-degree moment
identities of the configuration.
\begin{corollary}[Grassmannian design]\label{cor:design}
The eighteen planes form a Grassmannian $3$-design in $\Gr(2,4)$.
In particular,
\begin{equation}\label{eq:firstmoment}
    \frac1{18}\sum_{r=1}^{18}P_r=\frac12I_4
\end{equation}
and, for every $A\in\Sym_0(4)$,
\begin{equation}\label{eq:secondmoment}
    \frac1{18}\sum_{r=1}^{18}
       \Tr\!\left(A(P_r-\tfrac12I_4)\right)
       (P_r-\tfrac12I_4)
    =\frac19A.
\end{equation}
\end{corollary}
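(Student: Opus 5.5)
The plan is to read every statement off Proposition~\ref{prop:Qbasis}. The basic fact is that the eighteen centered projectors are the nine orthonormal vectors $Q_1,\ldots,Q_9$ of $\Sym_0(4)$ together with their negatives. For \eqref{eq:firstmoment}, sum \eqref{eq:PfromQ} over $r$. The terms $+Q_i$ and $-Q_i$ cancel in pairs, so $\sum_r P_r=18\cdot\tfrac12 I_4$. For \eqref{eq:secondmoment}, the map $Q\mapsto \Tr(AQ)Q$ is even in $Q$. Each $Q_i$ therefore contributes twice, and
\[
\frac1{18}\sum_{r=1}^{18}\Tr\!\left(A(P_r-\tfrac12I_4)\right)(P_r-\tfrac12I_4)=\frac{2}{18}\sum_{i=1}^9\Tr(AQ_i)Q_i=\frac19A.
\]
The last equality is the expansion of $A\in\Sym_0(4)$ in the orthonormal basis $\{Q_i\}$, using $\Tr(Q_iQ_j)=\delta_{ij}$.

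For the $3$-design claim I will use the definition that the uniform average over the eighteen planes agrees with the $\SO(4)$-invariant (Haar) average on $\Gr(2,4)$ for every polynomial of degree at most $3$ in the entries of the projector. Writing $P=\tfrac12I_4+Q$ with $Q\in\Sym_0(4)$, such polynomials are exactly the polynomials of degree at most $3$ in $Q$. By linearity it suffices to match the moment tensors $\mathbb E[Q^{\otimes m}]$ for $m=0,1,2,3$, and I will treat the odd and even degrees separately.

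\emph{Odd degrees ($m=1,3$).} The finite configuration is antipodally symmetric: it is invariant under $Q\mapsto -Q$. Hence its odd moment tensors vanish. On the Haar side, the map $\Pi\mapsto\Pi^\perp$ sends $P\mapsto I_4-P$, that is $Q\mapsto -Q$. This map commutes with the $\SO(4)$-action, so it preserves the unique invariant probability measure on $\Gr(2,4)$. Consequently the Haar odd moments of $Q$ vanish as well.

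\emph{Degree $2$.} The finite moment is $\frac1{18}\sum_r Q_r\otimes Q_r=\frac19\sum_i Q_i\otimes Q_i$. This is $\frac19$ times the identity on $\Sym_0(4)$, viewed as a tensor; this is just the statement \eqref{eq:secondmoment}. The Haar moment $\mathbb E[Q\otimes Q]$, viewed as an operator $A\mapsto\mathbb E[\Tr(AQ)Q]$ on $\Sym_0(4)$, is $\SO(4)$-equivariant. I will invoke the irreducibility of $\Sym_0(4)$ under $\SO(4)$: it is the real representation $(1,1)$ of $\mathfrak{su}(2)\oplus\mathfrak{su}(2)$, which is of real type, so Schur's lemma makes the commutant the scalars. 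The operator is therefore $cA$ for some constant $c$. Taking the trace over the nine-dimensional space gives $9c=\mathbb E\Tr(Q^2)=1$, since every centered rank-two projector satisfies $Q^2=\tfrac14I_4$. Thus $c=\tfrac19$, which matches the finite average.

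The only nonroutine step is this Schur argument. It needs the irreducibility and the real type of $\Sym_0(4)$ under $\SO(4)$. I would verify irreducibility concretely by noting that $\Sym_0(4)\cong\Lambda^2_+\otimes\Lambda^2_-$, where $\Lambda^2_\pm$ are the self-dual and anti-self-dual $2$-forms, each an irreducible $3$-dimensional representation of the corresponding $\SU(2)$ factor. Everything else reduces to the orthonormality relations in \eqref{eq:Qrelations}.
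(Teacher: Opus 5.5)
Your proposal is correct and follows essentially the same route as the paper. You compute the finite first and second moments from the orthonormal basis $\{Q_i\}$ of $\Sym_0(4)$, identify the Haar covariance as $\tfrac19$ times the identity via Schur's lemma on the irreducible module $\Sym_0(4)$ with the normalization $\mathbb E\Tr(Q^2)=1$, and kill odd moments on both sides via the complementation symmetry $Q\mapsto -Q$. The only differences are that you work with $\SO(4)$ rather than $O(4)$ and explicitly justify irreducibility and real type through $\Lambda^2_+\otimes\Lambda^2_-$, which the paper leaves implicit.
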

\begin{proof}
Write
\[
    Q=P-\frac12I_4
\]
for the centered projector of a two-plane.  By
Proposition~\ref{prop:Qbasis}, the centered projectors of our eighteen
planes are simply
\[
    \{\pm Q_1,\ldots,\pm Q_9\},
\]
where $Q_1,\ldots,Q_9$ is an orthonormal basis of the
nine-dimensional space $\Sym_0(4)$.  Thus, after centering, our
configuration is just an orthonormal basis together with its negatives.
The first moment is immediate: the centered projectors cancel in
opposite pairs, so
\[
    \frac1{18}\sum_{r=1}^{18}P_r
    =\frac12I_4
      +\frac1{18}\sum_{i=1}^9(Q_i-Q_i)
    =\frac12I_4.
\]
The second moment is equally simple.  Using the Frobenius inner product
$\langle A,B\rangle=\Tr(AB)$ and the orthonormality of the $Q_i$,
\[
\begin{aligned}
    \frac1{18}\sum_{r=1}^{18}
       \langle A,P_r-\tfrac12I_4\rangle
       (P_r-\tfrac12I_4)
    &=
    \frac1{18}\sum_{i=1}^9
       \left(
       \langle A,Q_i\rangle Q_i
       +\langle A,-Q_i\rangle(-Q_i)
       \right)\\
    &=
    \frac19\sum_{i=1}^9
       \langle A,Q_i\rangle Q_i
     =\frac19A.
\end{aligned}
\]
It remains to check that these are the corresponding Haar moments on
$\Gr(2,4)$.  Let $P$ be a uniformly random rank-two projector and set
$Q=P-I_4/2$.  Then $Q\in\Sym_0(4)$ and
\[
    \|Q\|_F^2=1.
\]
The Haar distribution is invariant under $P\mapsto UPU^T$ for
$U\in O(4)$.  Hence its covariance operator
\[
    A\longmapsto
    \mathbb E\!\left[\langle A,Q\rangle Q\right]
\]
commutes with the $O(4)$-action on $\Sym_0(4)$.  This representation is
irreducible, so the covariance must be a scalar multiple of the
identity.  Since $\dim\Sym_0(4)=9$ and $\|Q\|_F^2=1$, that scalar is
$1/9$.  Thus the finite configuration and Haar measure have the same
first and second moments.
Finally, orthogonal complementation sends
\[
    P\longmapsto I_4-P,
    \qquad
    Q\longmapsto -Q.
\]
Both Haar measure and our finite configuration are invariant under this
operation.  Consequently all odd centered moments vanish for both, in
particular the third moment.  The two distributions therefore agree in
centered moments of degrees $0,1,2,$ and $3$.  Since
$P=I_4/2+Q$, they agree on every polynomial of degree at most $3$ in
the projector entries.  This is exactly the Grassmannian
$3$-design property.
\end{proof}
For completeness, a direct calculation from the matrices in
Appendix~\ref{app:Q} gives the pair geometry.  Among the
$\binom{18}{2}=153$ unordered pairs of planes, there are $9$ orthogonal
pairs, $72$ pairs with principal angles $(0,\pi/2)$, and $72$ isoclinic
pairs with principal angles $(\pi/4,\pi/4)$.  Thus each plane has one
orthogonal complement, eight planes meeting it in a line, and eight
planes isoclinic to it at angle $\pi/4$.  These counts are not needed for
the covering-radius calculation.
\section{From projection coordinates to \texorpdfstring{$\SO(3)$}{SO(3)}}
\label{sec:SO3-reduction}
For $u\in S^3\subset\R^4$, define
\[
    x_r(u):=u^TP_ru=\norm{P_ru}^2.
\]
The distance from $u$ to the great circle $S^3\cap\Pi_r$ is
\[
    d\bigl(u,S^3\cap\Pi_r\bigr)
       =\arccos\sqrt{x_r(u)}.
\]
Hence, if
\begin{equation}\label{eq:tstar}
    t_\star:=\min_{u\in S^3}\max_{1\leq r\leq18}x_r(u),
\end{equation}
then
\begin{equation}\label{eq:rho-tstar}
    \rho(\Uclosed)=\arccos\sqrt{t_\star}.
\end{equation}
Set
\[
    y_i(u):=u^TQ_i u,
    \qquad 1\leq i\leq9.
\]
By \eqref{eq:PfromQ},
\[
    x_i(u)=\frac12+y_i(u),
    \qquad
    x_{i+9}(u)=\frac12-y_i(u),
\]
and therefore
\begin{equation}\label{eq:maxx}
    \max_{1\leq r\leq18}x_r(u)
      =\frac12+\max_{1\leq i\leq9}|y_i(u)|.
\end{equation}
Now arrange the nine centered coordinates into a $3\times3$ matrix:
\begin{equation}\label{eq:Mdef}
M(u):=
\begin{pmatrix}
y_1&y_6&y_8\\
y_4&y_2&y_9\\
y_5&y_7&y_3
\end{pmatrix}.
\end{equation}
This is the key simplification.
\begin{proposition}[Quaternionic double cover]\label{prop:doublecover}
For every $u=(a,b,c,d)\in S^3$,
\begin{equation}\label{eq:Rformula}
R(u):=2M(u)=
\begin{pmatrix}
a^2+b^2-c^2-d^2 & 2(bc-ad) & 2(ac+bd)\\
2(ad+bc) & a^2-b^2+c^2-d^2 & 2(cd-ab)\\
2(bd-ac) & 2(ab+cd) & a^2-b^2-c^2+d^2
\end{pmatrix}
\in\SO(3).
\end{equation}
The map $u\mapsto R(u)$ is the quaternionic double cover: in particular,
\[
    R(uv)=R(u)R(v).
\]
It is surjective onto $\SO(3)$ and has fibers $\{u,-u\}$.
\end{proposition}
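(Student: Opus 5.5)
The proof splits into two parts. First we identify the explicit entries of $M(u)$ via the matrices $Q_i$ of Appendix~\ref{app:Q}. Then we invoke the standard facts about the quaternion rotation matrix.

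\textbf{Step 1: entries of $2M(u)$.} For each $i$ we compute the quadratic form $2y_i(u)=2u^TQ_iu$ with $u=(a,b,c,d)$, using the displayed $Q_i$. Since $Q_i^2=\tfrac14 I_4$ and $\Tr Q_i=0$, each $2Q_i$ is a symmetric involution with eigenvalues $\pm1$, each of multiplicity two. Its quadratic form is therefore of the type $a^2\pm b^2\pm c^2\pm d^2$ with exactly two minus signs, or a sum of two cross terms $2(\cdot\,\cdot\pm\cdot\,\cdot)$.

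We then match the nine forms against the nine entries of \eqref{eq:Rformula} in the placement prescribed by \eqref{eq:Mdef}. A consistency check is available: the planes $\Pi_r$ are spanned by $\{C,CP\}$. For example, the coordinate plane $\operatorname{span}\{I,\mathbf k\}$ has projector $\operatorname{diag}(1,0,0,1)$. Its centered form therefore gives $\tfrac12(a^2-b^2-c^2+d^2)$, which is the $(3,3)$ entry over $2$, as expected. The Hadamard-translated planes produce the off-diagonal cross terms. The only real risk here is bookkeeping of signs and ordering, which the appendix fixes.

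\textbf{Step 2: $R(u)\in\SO(3)$ and the homomorphism property.} We recognize $R(u)$ as the matrix of the linear map $v\mapsto uv\bar u$ on the pure quaternions $\operatorname{span}\{\mathbf i,\mathbf j,\mathbf k\}$, written in the basis $\mathbf i,\mathbf j,\mathbf k$. To verify this, we compute $u\mathbf i\bar u$, $u\mathbf j\bar u$ and $u\mathbf k\bar u$ with $u=a+b\mathbf i+c\mathbf j+d\mathbf k$, and compare them with the columns of \eqref{eq:Rformula}. This is one short quaternion expansion per column.

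From this identification the remaining claims follow:
\begin{itemize}
\item Conjugation by a unit quaternion preserves the norm $|v|$ and the real part, so the map preserves the pure quaternions and is orthogonal.
\item Since $S^3$ is connected and $R(1)=I_3$, we get $\det R(u)=1$.
\item The homomorphism property follows from $(uv)w\overline{uv}=u(vw\bar v)\bar u$.
\end{itemize}

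\textbf{Step 3: kernel and surjectivity.} The kernel consists of the unit quaternions commuting with every pure quaternion, that is, the real units $\pm1$. Hence each fiber is $\{u,-u\}$.

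For surjectivity we use $u=\cos(\theta/2)+\sin(\theta/2)\,n$ with $n$ a unit pure quaternion. This $u$ acts as the rotation by angle $\theta$ about the axis $n$. One checks this directly by computing the action on $n$ and on a vector orthogonal to $n$. Since every element of $\SO(3)$ is such an axis-angle rotation, the map is onto.

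\textbf{Main obstacle.} The main obstacle is Step 1: making sure the ordering $y_1,\dots,y_9$ in \eqref{eq:Mdef}, including the signs of the $Q_i$ (the $\pm$ choice in \eqref{eq:PfromQ}), matches the entries exactly. An overall sign or transpose mismatch would instead yield $R(\bar u)$ or $-R$. This is why I would verify all nine forms explicitly rather than by symmetry.
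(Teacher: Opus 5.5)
Your proposal is correct and follows the paper's own route. You substitute the appendix matrices to get the entries of $2M(u)$, identify $R(u)$ with the matrix of $v\mapsto uv\bar u$ on pure quaternions, and read off the homomorphism property, the fibers $\{u,-u\}$ and surjectivity from the standard double cover; you also spell out the facts the paper only cites (orthogonality, $\det=1$ by connectedness, kernel equal to the real units, axis--angle surjectivity). One small quibble: the eigenvalue structure of $2Q_i$ alone does not force each quadratic form to be one of your two displayed types, but you read the nine forms directly off the explicit appendix matrices anyway, so nothing is lost.
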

\begin{proof}
Substituting the nine matrices of Appendix~\ref{app:Q} into
$y_i=u^TQ_i u$ gives \eqref{eq:Rformula}.  The displayed matrix is the
standard rotation matrix for conjugation by the unit quaternion $u$: if
$v$ is a purely imaginary quaternion, then the coordinate vector of
$uvu^{-1}$ is $R(u)$ times the coordinate vector of $v$.  Therefore
$R(uv)=R(u)R(v)$.  The quaternionic conjugation action is the standard
double covering $S^3\to\SO(3)$, so it is surjective and its kernel is
$\{\pm1\}$.  Hence precisely $u$ and $-u$ give the same rotation.
\end{proof}
Combining \eqref{eq:maxx} with Proposition~\ref{prop:doublecover} gives an
exact reduction, not merely a bound:
\begin{equation}\label{eq:tstarSO3}
    t_\star
      =\frac12+\frac12
       \min_{R\in\SO(3)}\norm{R}_{\max},
    \qquad
    \norm{R}_{\max}:=\max_{i,j}|R_{ij}|.
\end{equation}
\section{The flattest rotation in three dimensions}
\label{sec:minimax}
The objective in \eqref{eq:tstarSO3} is the three-dimensional instance
of the flat-orthogonal-matrix problem studied by Jaming and Matolcsi
\cite{JamingMatolcsi2015}.  In their notation one considers
\[
    u_n:=\min_{M\in O(n)}\max_{i,j}|M_{ij}|.
\]
Changing the sign of one row switches the determinant without changing
the entry magnitudes, so the minima over $ O(n)$ and
$\SO(n)$ agree.  We prove that $ u_3=\frac{2}{3} $ and we provide a complete
equality classification.
\begin{theorem}\label{thm:minimax}
For every $R\in\SO(3)$,
\[
    \norm{R}_{\max}\geq\frac23.
\]
Equality is attained by
\begin{equation}\label{eq:Rstar}
    R_\star=\frac13
    \begin{pmatrix}
    -1&2&2\\
    2&-1&2\\
    2&2&-1
    \end{pmatrix}.
\end{equation}
Moreover every equality case is obtained from $R_\star$ by signed row
and column permutations that preserve determinant $+1$.
\end{theorem}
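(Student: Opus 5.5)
The plan is to argue row by row, using only that the rows of $R$ are orthonormal. Set $m:=\norm{R}_{\max}$ and suppose, for contradiction, that $m<2/3$. In particular $m^2<1/2$. Each row $(a_1,a_2,a_3)$ of $R$ is a unit vector with $a_k^2\le m^2$. Hence every entry satisfies $a_k^2\ge 1-2m^2>0$, so all nine entries of $R$ are nonzero. The determinant condition plays no role in the inequality. As noted after the definition of $u_n$, flipping the sign of a row lets us pass freely between $O(3)$ and $\SO(3)$, so determinant $+1$ only enters at the very end of the equality analysis.

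Next, take two rows $a,b$. Their orthogonality $a_1b_1+a_2b_2+a_3b_3=0$ involves three nonzero products, which cannot all have the same sign. After relabelling columns we may therefore assume that $a_1b_1$ and $a_2b_2$ have one sign and $a_3b_3$ has the other. Then
\[
|a_3b_3|=|a_1b_1|+|a_2b_2|.
\]
The left side is at most $m^2$. For the right side, the constraints are $|a_1|^2+|a_2|^2=1-a_3^2$, $|b_1|^2+|b_2|^2=1-b_3^2$, and every entry at most $m$. Under these constraints the sum $|a_1b_1|+|a_2b_2|$ is minimized by pairing a largest entry of one row with a smallest entry of the other, i.e.\ by taking $|a_1|=m=|b_2|$. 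This gives
\[
|a_1b_1|+|a_2b_2|\ \ge\ m\Bigl(\sqrt{1-a_3^2-m^2}+\sqrt{1-b_3^2-m^2}\Bigr)\ \ge\ 2m\sqrt{1-2m^2}.
\]
Comparing the two bounds yields $m^2\ge 2m\sqrt{1-2m^2}$, that is, $m^2\ge 4(1-2m^2)$, so $m\ge 2/3$. This contradicts the assumption and proves the inequality.

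The step I expect to need the most care is this rearrangement-type minimization. It is an optimization of a bilinear form over two arcs of circles with box constraints, and I would carry it out by checking endpoints: with $|a_1|$ fixed, the sum is monotone along each arc, so the minimum sits at an endpoint. Monotonicity in $|a_3|$ and $|b_3|$ is clear, since the left bound grows and the right bound shrinks as these increase.

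For the equality case $m=2/3$, every inequality in the chain must be tight for every pair of rows. This forces $|a_3|=|b_3|=m$, and it forces the other entries to be $m$ and $\sqrt{1-2m^2}=1/3$. So every row consists of two entries of modulus $2/3$ and one of modulus $1/3$. Each column also has norm one, so by the same count each column contains exactly one entry of modulus $1/3$. The positions of the $1/3$ entries therefore form a permutation pattern, and after a column permutation they lie on the diagonal. Scaling rows and columns by $\pm1$ then normalizes the signs so that the first row and first column are those of $R_\star$. Orthogonality of the rows then fixes all remaining signs, and the result is $R_\star$. Finally, requiring $\det=+1$ restricts the allowed signed permutations to those preserving the determinant, which gives the claimed description of all equality cases. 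A direct check shows that $R_\star$ itself lies in $\SO(3)$ with $\norm{R_\star}_{\max}=2/3$.
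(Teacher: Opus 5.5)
Your proof is correct, but it takes a different route from the paper's. The paper does no quantitative optimization. It assumes $\norm{R}_{\max}<2/3$, so every entry has modulus in $(1/3,2/3)$, and writes the third row as the cross product of the first two via $R^T=\operatorname{adj}(R)$. Each entry $g=bf-ce$, $h=cd-af$, $i=ae-bd$ can exceed $1/3$ in modulus only if its two products have opposite signs. Multiplying the three sign conditions gives $\operatorname{sgn}(a^2b^2c^2d^2e^2f^2)=-1$, a contradiction. In the equality case the same parity count produces one cofactor whose two products share a sign, and $1/3\le|i|\le 4/9-1/9$ then pins all moduli.

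Your argument uses only the orthogonality of one pair of rows plus a rearrangement bound. This buys two things:
\begin{itemize}
\item a slightly stronger statement: any two orthonormal vectors in $\R^3$ already have a coordinate of modulus at least $2/3$, with no third row or determinant needed;
\item an equality analysis that falls out of the same chain. Tightness gives the modulus pattern $\{2/3,2/3,1/3\}$ in each row, column norms force the $1/3$ entries into a permutation pattern, and sign normalization plus orthogonality recovers $R_\star$. I checked that the five sign normalizations are independent, and that orthogonality then forces the rows $(2,-1,2)/3$ and $(2,2,-1)/3$.
\end{itemize}
The price is the optimization step, and your justification there is slightly off. With $|a_1|,|a_2|$ fixed, $|a_1|u+|a_2|v$ is generally not monotone along the constrained arc $u^2+v^2=t^2$; it can rise and then fall. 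The conclusion is still right.

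The clean fix is to write $|a_1|=s\cos\alpha$, $|a_2|=s\sin\alpha$, $|b_1|=t\cos\beta$, $|b_2|=t\sin\beta$. Then the sum equals $st\cos(\alpha-\beta)$, and the box constraints confine $\alpha,\beta$ to subintervals of $[0,\pi/2]$. The minimum is attained exactly when $|\alpha-\beta|$ is maximal, which happens only at the two cross corners. Equivalently, comparing corners, $m^2+pq-m(p+q)=(m-p)(m-q)\ge0$. This gives your bound and also the uniqueness of the minimizer, which your equality analysis uses tacitly.

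Finally, state explicitly that the chain remains valid at $m=2/3$, since it only required $m^2<1/2$.
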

\begin{proof}
The displayed matrix $R_\star$ lies in $\SO(3)$ and has maximum entry
magnitude $2/3$, so only the lower bound requires proof.
Suppose first that $R=(r_{ij})\in\SO(3)$ satisfies
$\norm{R}_{\max}<2/3$.  Every entry must then have magnitude strictly
greater than $1/3$.  Indeed, if one coordinate of a unit row had
magnitude at most $1/3$, the other two squares would sum to at least
$8/9$, forcing one of them to have magnitude at least $2/3$.
Write
\[
R=
\begin{pmatrix}
a&b&c\\ d&e&f\\ g&h&i
\end{pmatrix}.
\]
Since $R^{-1}=R^T=\operatorname{adj}(R)$,
\begin{equation}\label{eq:cofactors}
    g=bf-ce,\qquad
    h=cd-af,\qquad
    i=ae-bd.
\end{equation}
Every product on the right has magnitude strictly between $1/9$ and
$4/9$.  If, for example, $ae$ and $bd$ had the same sign, then
\[
    |i|=\bigl||ae|-|bd|\bigr|<\frac13,
\]
contrary to the preceding paragraph.  Thus $ae$ and $bd$ have opposite
signs.  The other two cofactor identities similarly show that $bf$ and
$ce$ have opposite signs, and that $cd$ and $af$ have opposite signs.
Multiplying these three sign relations gives
\[
    \operatorname{sgn}(a^2b^2c^2d^2e^2f^2)=-1,
\]
which is impossible.  Hence $\norm{R}_{\max}\geq2/3$.
Now assume equality.  The same row-norm argument gives
$|r_{ij}|\geq1/3$ for every entry.  In the three cofactor identities
\eqref{eq:cofactors}, the three comparisons of signs cannot all be
opposite: the product of their three sign ratios is $+1$.  After signed
row and column permutations, we may therefore assume that $ae$ and $bd$
have the same sign.  Then
\[
    \frac13\leq|i|
       =\bigl||ae|-|bd|\bigr|
       \leq\frac49-\frac19=\frac13.
\]
Thus equality holds throughout, so one of $|ae|,|bd|$ equals $1/9$ and
the other equals $4/9$.  By applying, if necessary, a determinant-one
signed interchange of the first two columns, we may arrange
\[
    |a|=|e|=\frac13,
    \qquad
    |b|=|d|=\frac23.
\]
The unit lengths of the first two rows force
$|c|=|f|=2/3$.  The remaining cofactor identities then force
$|g|=|h|=2/3$ and $|i|=1/3$.  Hence, up to row and column permutations,
\[
    (|r_{ij}|)=\frac13
    \begin{pmatrix}
    1&2&2\\2&1&2\\2&2&1
    \end{pmatrix}.
\]
Finally use signed diagonal row and column operations to normalize the
first row to $(-1,2,2)/3$.  Orthogonality forces the second and third
rows, up to the remaining allowed signs, to be
$(2,-1,2)/3$ and $(2,2,-1)/3$.  The determinant condition fixes the last
sign.  Thus every equality case is a signed row/column permutation of
$R_\star$.
\end{proof}
\begin{corollary}[Number of extremal rotations]\label{cor:96}
Exactly $96$ matrices $R\in\SO(3)$ satisfy
$\norm{R}_{\max}=2/3$.
\end{corollary}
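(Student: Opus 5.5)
By Theorem~\ref{thm:minimax}, the extremal rotations are exactly the images of $R_\star$ under signed row and column permutations that keep determinant $+1$. So I will count the orbit of $R_\star$ under the group $G$ generated by these operations, acting by $R\mapsto ARB$. Here $A,B$ range over the signed permutation matrices (the hyperoctahedral group $B_3$ of order $48$) with $\det A\det B=1$. The count is then $|G|$ divided by the stabilizer of $R_\star$.

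The cleaner route is a direct count of matrices, not a stabilizer computation. First, by the equality analysis in the proof of Theorem~\ref{thm:minimax}, every extremal $R$ has absolute-value pattern $(|r_{ij}|)$ equal to a row/column permutation of $\frac13\begin{pmatrix}1&2&2\\2&1&2\\2&2&1\end{pmatrix}$. Such a pattern is determined by where the three entries of magnitude $1/3$ sit. These entries occupy a permutation pattern, since each row and each column contains exactly one of them. That gives $3!=6$ possible patterns.

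Next, fix a pattern and count the sign assignments that give an element of $\SO(3)$. Flipping the signs of rows and columns acts on sign patterns through the group $\{\pm1\}^3\times\{\pm1\}^3$ modulo the global sign $(-1,-1,-1;-1,-1,-1)$, which has order $32$. This action is free on matrices with all entries nonzero. One checks freeness by noting that if $D_1RD_2=R$ with all $r_{ij}\neq0$, then $(D_1)_{ii}(D_2)_{jj}=1$ for all $i,j$, so $D_1=\pm I$ and $D_2=D_1$. Thus each pattern supports exactly $32$ orthogonal matrices obtainable from one base matrix, and half of them, $16$, have determinant $+1$. This is because flipping a single row changes the determinant's sign and pairs the orthogonal matrices off.

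It remains to justify that every orthogonal matrix with the given pattern is a sign-twist of a single base matrix. This is the argument at the end of the proof of Theorem~\ref{thm:minimax}: normalize the first row and the first column signs, and orthogonality then forces the rest up to one sign. Once that is established, the total is $6\cdot16=96$.

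As a sanity check, I would cross-verify via $R(\pm u)$: the result should correspond to $192$ points of $S^3$ under Proposition~\ref{prop:doublecover}, matching Theorem~\ref{thm:main}. The main obstacle is making the "every sign pattern is a twist of the base" step airtight. Concretely, I need to show that an orthogonal matrix with a fixed magnitude pattern has exactly $2^5$ admissible sign choices, rather than more coming from a second inequivalent sign solution. I would settle this by the row-orthogonality constraint: for rows $(\epsilon_1,2\epsilon_2,2\epsilon_3)/3$ and $(2\delta_1,\delta_2,2\delta_3)/3$, orthogonality means $2\epsilon_1\delta_1+2\epsilon_2\delta_2+4\epsilon_3\delta_3=0$. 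This forces $\epsilon_1\delta_1=\epsilon_2\delta_2=-\epsilon_3\delta_3$, which leaves a unique sign class.
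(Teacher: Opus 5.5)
Your proof is correct, but it counts differently from the paper. The paper uses the full conclusion of Theorem~\ref{thm:minimax}: the equality cases form a single orbit of $G=\{(L,M)\in B\times B:\det L\det M=1\}$, with $|G|=1152$, acting by $R\mapsto LRM^{-1}$. It then applies orbit--stabilizer. Because the magnitude-$1/3$ entries of $R_\star$ lie exactly on the diagonal, a stabilizing pair must use the same permutation on rows and columns, giving $6$ choices. The signs are then forced to be $(I,I)$ or $(-I,-I)$, so the stabilizer has order $12$ and the orbit has $1152/12=96$ elements.

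You announce this stabilizer route but drop it. Instead you use only the magnitude part of the classification, i.e.\ the $3!=6$ placements of the $1/3$ entries, and count sign patterns for each placement. Three facts carry this count. The $32$-element quotient of $\{\pm1\}^3\times\{\pm1\}^3$ acts freely. A single row flip pairs off the two determinants. And $x+y+2z=0$ with $x,y,z\in\{\pm1\}$ forces $x=y=-z$, which caps the orthogonal sign patterns at $8\cdot2\cdot2=32$. For the third row it is cleanest to note that it must be $\pm$ the cross product of the first two. This gives $6\cdot16=96$.

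Your route is more elementary and self-contained. In effect it makes explicit the sign-normalization step that the proof of Theorem~\ref{thm:minimax} passes over quickly. The paper's route is shorter once that theorem is available, and it presents the extremal set as one group orbit. That orbit form is exactly what Section~\ref{sec:deep-holes} reuses to show the deep holes form a single Clifford orbit.
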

\begin{proof}
Let $B\leq O(3)$ be the signed permutation group, $|B|=48$, and let
\[
    G=\{(L,M)\in B\times B:\det(L)\det(M)=1\}.
\]
Then $|G|=48^2/2=1152$, and Theorem~\ref{thm:minimax} says that the
equality cases form the $G$-orbit of $R_\star$ under
$(L,M)\cdot R=LRM^{-1}$.
The stabilizer of $R_\star$ has order exactly $12$.  Indeed, in
$3|R_\star|$ the entries of magnitude $1$ occur precisely on the
diagonal.  Therefore the row permutation in a stabilizing pair uniquely
determines the column permutation, giving six possibilities.  Once these
permutations are fixed, comparison of the nonzero entries shows that the
row and column signs have exactly two simultaneous choices, differing by
$(-I,-I)$.  Hence
\[
    |\operatorname{Stab}_G(R_\star)|=6\cdot2=12,
\]
and orbit--stabilizer gives
\[
    \frac{1152}{12}=96.
\]
\end{proof}
\section{Proof of the main theorem and the deep holes}
\label{sec:deep-holes}
Theorem~\ref{thm:minimax} and \eqref{eq:tstarSO3} immediately give
\[
    t_\star
      =\frac12+\frac12\cdot\frac23
      =\frac56.
\]
Equation \eqref{eq:rho-tstar} therefore yields
\[
    \boxed{\rho(\Uclosed)=\arccos\sqrt{\frac56}}.
\]
An explicit deep hole is particularly simple.  Take
\begin{equation}\label{eq:ustar}
    u_\star=\frac1{\sqrt3}(0,1,1,1)\in S^3.
\end{equation}
Direct substitution in \eqref{eq:Rformula} gives
\[
    R(u_\star)=R_\star.
\]
Therefore
\[
    \max_r x_r(u_\star)
      =\frac12+\frac12\norm{R_\star}_{\max}
      =\frac56,
\]
so $u_\star$ is a deep hole.  Under our quaternion convention
$u=(a,b,c,d)\leftrightarrow aI-i(bX+cY+dZ)$, the corresponding antipodal
pair of unitaries is
\begin{equation}\label{eq:Ustar}
    \pm U_\star
      =\pm\frac1{\sqrt3}
       \begin{pmatrix}
       -i&-1-i\\
       1-i&i
       \end{pmatrix}.
\end{equation}
It remains to identify all the deep holes.  Let $B^+\leq\SO(3)$ be the
group of determinant-$+1$ signed permutation matrices.  It has order
$24$, and the restriction of $R$ to $\Cl_2$ maps onto $B^+$ with kernel
$\{\pm I\}$; this is the usual projective action of the Clifford group
on the three Pauli axes.
By Theorem~\ref{thm:minimax}, every extremal rotation can be written
$LR_\star M$, where $L$ and $M$ are signed permutation matrices with
$\det(L)\det(M)=1$.  If both determinants are $-1$, replacing $(L,M)$
by $(-L,-M)$ leaves $LR_\star M$ unchanged and makes both determinants
$+1$.
Thus the $96$ extremal rotations form a single left--right projective
Clifford orbit.  Each has two antipodal lifts in $S^3$, and multiplication
by $-I\in\Cl_2$ interchanges the two lifts.  Therefore all $192$ deep
holes form a single left--right binary-Clifford orbit.  This proves the
remaining assertions of Theorem~\ref{thm:main}.
Quantum gates are normally identified up to global phase.  Thus the
$192$ points in $\SU(2)$ represent $96$ projective gates.  Because
$\Uclosed$ itself is antipodally invariant, passing from the round metric
on $S^3$ to the induced projective metric does not change the covering
radius.
\section{Conclusion}
The closed single-qubit Clifford hierarchy has a surprisingly small
geometric model: eighteen great circles in $S^3$, supported on the
eighteen planes. This
turns the covering problem into the three-dimensional flat-orthogonal
matrix problem and gives the sharp all-level Clifford-fidelity threshold
$5/6$, together with all of its equality cases.
This suggests a natural higher-dimensional question.  Single-qudit
hierarchy gates in prime dimension are also semi-Clifford
\cite{deSilva2025clifford}, while their diagonal part is explicitly
classified \cite{CuiGottesmanKrishna}.  These results suggest analogous
covering problems for closures of finite Clifford translates of
higher-dimensional diagonal tori.  The especially simple identification
$\SU(2)/\{\pm1\}\cong\SO(3)$ is special to the qubit case, so new geometry
will be needed beyond dimension two.
\section*{Acknowledgments}
We thank Eric Kubischta for insightful discussions on the Clifford
hierarchy, and Jonas Anderson for a Quantum Computing Stack Exchange post
that inspired the geometric approach of this paper.
\appendix
\section{The nine centered projectors}
\label{app:Q}
All matrices are written in the orthonormal quaternion basis
$(1,\mathbf i,\mathbf j,\mathbf k)$.  Define
\[
Q_1=\frac12
\begin{pmatrix}
1&0&0&0\\0&1&0&0\\0&0&-1&0\\0&0&0&-1
\end{pmatrix},\quad
Q_2=\frac12
\begin{pmatrix}
1&0&0&0\\0&-1&0&0\\0&0&1&0\\0&0&0&-1
\end{pmatrix},\quad
Q_3=\frac12
\begin{pmatrix}
1&0&0&0\\0&-1&0&0\\0&0&-1&0\\0&0&0&1
\end{pmatrix},
\]
\[
Q_4=\frac12
\begin{pmatrix}
0&0&0&1\\0&0&1&0\\0&1&0&0\\1&0&0&0
\end{pmatrix},\quad
Q_5=\frac12
\begin{pmatrix}
0&0&-1&0\\0&0&0&1\\-1&0&0&0\\0&1&0&0
\end{pmatrix},\quad
Q_6=\frac12
\begin{pmatrix}
0&0&0&-1\\0&0&1&0\\0&1&0&0\\-1&0&0&0
\end{pmatrix},
\]
\[
Q_7=\frac12
\begin{pmatrix}
0&1&0&0\\1&0&0&0\\0&0&0&1\\0&0&1&0
\end{pmatrix},\quad
Q_8=\frac12
\begin{pmatrix}
0&0&1&0\\0&0&0&1\\1&0&0&0\\0&1&0&0
\end{pmatrix},\quad
Q_9=\frac12
\begin{pmatrix}
0&-1&0&0\\-1&0&0&0\\0&0&0&1\\0&0&1&0
\end{pmatrix}.
\]
The eighteen projectors are
\[
    P_i=\frac12I_4+Q_i,
    \qquad
    P_{i+9}=\frac12I_4-Q_i,
    \qquad i=1,\ldots,9.
\]
Thus the orthogonal-complement involution is visible directly as
$Q_i\leftrightarrow-Q_i$.

\end{document}